\newcommand{\ketbra}[1]{\ket{#1}\!\bra{#1}}
\newcommand{\trace}[1]{\mathrm{Tr}\!\left ( {#1}\right )}
\newcommand{\absval}[1]{\left| {#1} \right|}
\newcommand{\e}[1]{\exp\left\{#1\right\}}
\newtheorem{theorem}{Theorem}
\newtheorem{lemma}[theorem]{Lemma}
\begin{document}
\title{Quantum States with Maximal Magic}
\author{Gianluca Cuffaro}
\author{Christopher A. Fuchs}
\affiliation{Department of Physics, University of Massachusetts Boston \\ 100 Morrissey Boulevard, Boston MA 02125, USA}

\begin{abstract}
Finding ways to quantify magic is an important problem in quantum information theory. Recently Leone, Oliviero and Hamma introduced a class of magic measures for qubits, the stabilizer entropies of order $\alpha$, to aid in studying nonstabilizer resource theory. This suggests a way to search for those states that are as distinct as possible from the stabilizer states. In this work we explore the problem in any finite dimension $d$ and characterize the states that saturate an upper bound on stabilizer entropies of order $\alpha\geq2$. Particularly, we show that if a Weyl-Heisenberg (WH) covariant Symmetric Informationally Complete (SIC) quantum measurement exists, its states \emph{uniquely} maximize the stabilizer entropies by saturating the bound.  No other states can reach so high. This result is surprising, as the initial motivation for studying SICs was a purely quantum-foundational concern in QBism.  Yet our result may have implications for quantum computation at a practical level, as it demonstrates that this notion of maximal magic inherits all the difficulties of the 25-year-old SIC existence problem, along with the deep questions in number theory associated with it.  
\end{abstract}

\maketitle
\section{Introduction}
The notion of a ``magic state'' for quantum computation was introduced by Bravyi and Kitaev~\cite{kitaev} in 2004 as a means of getting around the Gottesman-Knill theorem~\cite{gottesman1998heisenbergrepresentationquantumcomputers}. This theorem states that any quantum computation with Clifford-group quantum gates and stabilizer-state inputs can be simulated efficiently via a classical computation.  In other words, the theorem identifies a fragment of quantum theory that is essentially classical. However, if these computations are supplemented with the injection of ``magic states'' everything changes---in particular, one regains universal quantum computation, which cannot be simulated efficiently with a classical computer.

Thus, it is not surprising that a lot of effort has been put into characterizing this class of states. In this context, resource theory~\cite{QRT} may provide a crucial tool for approaching the problem, with the stabilizer states playing the role of a free resource. Yet, finding a well-behaved measure with this idea as its basis is still no easy task. Although a number of such measures have been introduced---for instance various notions of Wigner-function ``negativity'' \cite{Veitch_2012, nsf30, PhysRevA.102.032221} and mana~\cite{Veitch_2014,PhysRevLett.109.230503,warmuz2024magicmonotonefaithfuldetection,PhysRevLett.118.090501,PhysRevLett.115.070501}---usually they either have a limited range of application to specific dimensions or require taxing optimization procedures for their calculation. This makes them less viable from a practical standpoint. Recently however, Leone et al.\ \cite{Leone_2022} provided a class of measures of magic for qubit systems that are well-defined and easy to compute in real-world applications~\cite{PhysRevLett.132.240602sss,PhysRevLett.133.010601vvvvv,PhysRevLett.131.180401aaaa,oliviero_measuring_2022}. These are the so-called stabilizer entropies of order $\alpha$ (based on the classic R\'enyi entropies of order $\alpha$~\cite{1975measures}). 

With their aid, one can easily characterize the free resources in this theory, namely the stabilizer states. On the other hand, it has been harder to find the maximum of these measures for the purpose of characterizing the states most distinct from stabilizer states. In the literature~\cite{andersson_states_2015, wang_stabilizer_2023} the prime suspects for that class have been the states constituting the so-called Weyl-Heisenberg covariant Symmetric Informationally Complete POVMs~\cite{Renes_2004,Zauner:phd}, or WH-SICs for brevity.  However, until now it has not been clear whether there might also be other states with maximal magic.

For any $d$-dimensional Hilbert space, the SICs (i.e., those with or without the Weyl-Heisenberg condition) are very special POVMs consisting of $d^2$ rank-1 projectors with equal pairwise Hilbert-Schmidt inner products. Unfortunately, these objects are as hard to analyze mathematically as they are simple to define. Clearly, they have a very simple geometrical interpretation since they always distribute in quantum-state space on the vertices of a regular simplex. However, the question of whether they exist in all finite dimensions is a 25-year-old unsolved problem in quantum information theory~\cite{axioms6030021}. To date, they have been proven to exist in 117 different dimensions, and there is strong numerical evidence of their existence (to thousands of digits of accuracy) in about 100 dimensions more than that~\cite{grasslprivate}. However a full proof of their existence in every $d$ may be as elusive as it ever was, as it has recently become understood that the question is deeply connected to one of Hilbert's famous turn-of-the-century problems---problem 12 to be exact, to do with algebraic number theory~\cite{nsf17}. Yet, their interest for physics is already firmly established by the overwhelming number of problems \emph{they would solve if they were to exist}.  These range from quantum foundations where they would give a means of expressing the Born rule as a minimal deviation from the classical Law of Total Probability~\cite{fuchs2023qbismnext, DeBrota2020}, to Lie and Jordan algebraic questions in mathematical physics~\cite{nsf21}, to good detectors of quantum entanglement~\cite{nsf7}, to elements in novel quantum key distribution schemes~\cite{nsf8,fuchs2003squeezingquantuminformationclassical}, to components in optimal device-independent quantum random number generation~\cite{nsf9,nsf10}, to dimension witnessing~\cite{nsf11}, and optimal quantum-state tomography~\cite{nsf12,nsf13,nsf14}.

In our work, we consider an extension of the stabilizer entropies of order $\alpha$, as in Ref.\ \cite{wang_stabilizer_2023}, to generic $d$-dimensional systems, and prove that for any $\alpha\geq2$, a state can achieve a certain upper bound on magic \emph{if and only if} it belongs to a WH-SIC\@. This result establishes that if a WH-SIC exists in all finite dimensions $d$, then its states will always uniquely possess the maximum allowed magic.  That is to say, the WH-SICs are a kind of (quantum) antithesis to the (classical) stabilizer states.  This adds to the growing list of reasons for why the SICs may be ultimately important structures within quantum information theory despite the difficulty (or perhaps precisely \emph{because of} the difficulty) of establishing their existence.  It further establishes that the concept of magic, at least as quantified by the stabilizer entropies, is deeply number theoretic in character.

The paper is organized as follows. In Section~\ref{II} we present basic notions from the stabilizer formalism and introduce the stabilizer entropies as measures of magic. In Section \ref{3} we define the Weyl-Heisenberg covariant SIC-POVMs and present some of their more notable properties. In Section~\ref{4} we prove our main theorem. Finally, in Section~\ref{5} we discuss the implications of our result and present some future perspectives.

\section{Stabilizer formalism and stabilizer entropy} \label{II}
\subsection{Weyl-Heisenberg group}
For any natural number $d$, let $\mathbb{Z}_d=\{0,\dots,n-1\}$ be the set of congruence classes modulo $d$. We define the computational basis of a $d$-dimensional Hilbert space $\mathcal{H}_d$ as the orthonormal set $\{\ket{k}\,|\,k\in\mathbb{Z}_{d}\}$. We then define the shift and the clock operators respectively as:
\begin{equation}
    X\ket{k}\equiv\ket{k+1}\quad\quad,\quad\quad Z\ket{k}\equiv\omega^k\ket{k},
\end{equation}
with $\omega=e^{\frac{2\pi i}{d}}$ and the arithmetic operations being defined modulo $d$.

The shift and clock operators satisfy the following properties:
\begin{equation}
    X^d=Z^d=\mathbb{I}\quad\quad,\quad\quad X^kZ^l=\omega^{-kl}Z^lX^k.
\end{equation}
Using them, we define the displacement operators as
\begin{equation}
    D_\textbf{a}\equiv\omega^{\frac{a_1a_2}{2}}X^{a_1}Z^{a_2},\quad\textbf{a}=(a_1,a_2)\in\mathbb{Z}_d\times\mathbb{ Z}_d.
\end{equation}
Using the properties of the shift and clock operators, it is easy to show that the displacement operators satisfy a number of interesting properties: 
\begin{align}
    &D_{\textbf{a}}^\dagger=D_{-\textbf{a}},\label{dagger}\\
    &D_\textbf{a}D_\textbf{b}=\omega^{\frac{[\textbf{a,\textbf{b}}]}{2}}D_{\textbf{a}+\textbf{b}}=\omega^{[\textbf{a},\textbf{b}]}D_{\textbf{b}}D_\textbf{a},\label{HEalgebra}\\
    &\trace{D_\textbf{a}D_\textbf{b}^\dagger}=d\delta_{\textbf{ab}},\label{orth}
\end{align}
where $[\textbf{a},\textbf{b}]=a_1b_2-a_2b_1$ is the symplectic product on $\mathbb{Z}_d\times\mathbb{ Z}_d$. In particular, \cref{orth} establishes the displacement operators as an orthogonal basis of unitary operators.

The Weyl-Heisenberg (WH) group $\mathcal{W}(n)$ is generated by all the displacement operators:
\begin{equation}
    \mathcal{W}(n)\equiv\{\omega^sD_\textbf{a}\}\,,\quad\quad s\in\mathbb{Z}_n,\,\,\textbf{a}\in\mathbb{Z}_d\times\mathbb{ Z}_d.
\end{equation}
Notice that the order of the group is given by $d\times d^2$.

The same structure can be easily replicated in the case of composite systems. Consider a composite Hilbert space $\mathcal{H}_d=\mathcal{H}_{n_1}\otimes\dots\otimes \mathcal{H}_{n_k}$, with dimension $d=n_1\dots n_k$. Displacement operators acting on $\mathcal{H}_d$ are defined via the tensor product of the corresponding operators on each subsystem~\cite{gross, Veitch_2012}. The composite Weyl-Heisenberg group is therefore the group generated by the composite displacement operators:
\begin{equation}
    \mathcal{W}(d)\equiv\{\omega^sD_{\textbf{a}_1}\otimes\dots\otimes D_{\textbf{a}_k}\}
\end{equation}
which has again order $d\times d^2$.

Finally, let $\overline{\mathcal{W}}(d)=\mathcal{W}(d)/\langle\omega^s\mathbb{I}\rangle$ be the quotient group of the WH group containing only $+1$ phases. Not being interested in global phases, we will only consider the quotient group $\overline{\mathcal{W}}(d)$, and for the sake of notation we will drop the overline. Let us remark that after discarding the nontrivial phases, the order of the group becomes $d^2$.

As a final note, it is worth pointing out that the structure of the Weyl-Heisenberg group can depend strongly on the number theoretical properties of the dimension of the finite space $\mathbb{Z}_d$. Indeed, many results concerning the WH group are valid only if one considers prime power dimensions, even or odd dimensions, and finally multipartite qubit or qudit systems~\cite{bengtsson2017discretestructuresfinitehilbert, Zhu_2010, 10.1063/1.1896384,gross}. Even if some of these results will have consequences in conjunction with the ones we present in this work, our proofs will be independent of dimension. We will reserve the discussion about the implications of the different possible versions of the WH group for Section \ref{5}. For the sake of notation, we will refer to any choice of WH group as $\mathcal{W}(d)$.

\subsection{Clifford group and Stabilizer states}
Consider the unitaries $U$ such that for any $D_\textbf{a}\in\mathcal{W}(d)$
\begin{equation}
    U^\dagger D_\textbf{a}U=e^{i\gamma}D_{\textbf{a}'},
\end{equation}
for some $\textbf{a}'\in\mathbb{Z}_d\times\mathbb{Z}_d$ and some $\gamma\in\mathbb{R}$.
A straightforward calculation shows that they form a group. In particular, the group of unitaries $\mathcal{C}(d)$ with this property is called the Clifford group, and by definition it is the normalizer of the Weyl-Heisenberg group:
\begin{equation}
    \mathcal{C}(d)=\{U\in\mathcal{U}(d)\,|\,U^\dagger\mathcal{W}(d)U\subseteq\mathcal{W}(d)\}.
\end{equation}
Namely, the Clifford group is the subgroup of the unitaries that permutes (up to a phase) the displacement operators. Another crucial feature of Clifford operators is that they preserve the set of stabilizer states. 

We can define the stabilizer states by noticing that according to the Weyl-Heisenberg algebra described in \cref{HEalgebra} whenever $[\textbf{a},\textbf{b}]=0$, the corresponding displacement operators $D_\textbf{a}$ and $D_\textbf{b}$ commute. Hence, if can we find a subset $M\subset \mathbb{Z}_d\times\mathbb{Z}_d$ such that $\forall\,\textbf{a},\textbf{b}\in M$, $[\textbf{a},\textbf{b}]=0$, we can simultaneously diagonalize all the corresponding displacement operators. The maximal cardinality of such a subset $M$ is given by $\absval{M}=d$. In this case there exists a unique (up to a global phase) vector $\ket{M}$ such that
\begin{equation}\label{stab}
    D_{\textbf{a}}\ket{M}=\ket{M}\quad\quad \forall\textbf{a}\in M.
\end{equation}
The vector $\ket{M}$ is called a stabilizer state, and the abelian subgroup of $\mathcal{W}(d)$ that satisfies \cref{stab} is the stabilizing group of $\ket{M}$. It is possible to show~\cite{gross} that we can always write the rank-1 projector corresponding to $\ket{M}$ as 
\begin{equation}
    \ketbra{M}=\frac{1}{d}\sum_{\textbf{a}\in M}D_\textbf{a}.
\end{equation}
It is clear now why the Clifford unitaries send stabilizer states into other stabilizer states.

As a final remark, notice that in this work we are only interested in pure stabilizer states. It is possible to extend the definition to mixed states, e.g., by taking convex combination of pure stabilizer states~\cite{warmuz2024magicmonotonefaithfuldetection}. However, mixed stabilizer states pose new and different problems that are beyond the scope of our work.

\subsection{Stabilizer Entropy}
Consider a vector $\ket{\psi}\in\mathcal{H}_d$. Since the WH operators form an orthogonal basis for the space of linear operators on $\mathcal{H}_d$, we can expand the state $\psi\equiv\ketbra{\psi}$ in terms of displacement operators:
\begin{equation}
    \psi=\frac{1}{d}\sum_{\textbf{a}}\trace{D_\textbf{a}^\dagger\psi}D_\textbf{a}.
\end{equation}

The function $\chi_\textbf{a}(\psi)\equiv\frac{1}{d}\trace{D_\textbf{a}\psi}$ is usually referred to as the characteristic function of the state $\psi$~\cite{gross,Veitch_2012}. Since the $D_\textbf{a}$'s are not hermitian operators (except for the case of $d=2^n$), in general the characteristic function of a given state will be a complex-valued function. However, by taking the absolute value squared of $\chi_\textbf{a}(\psi)$ and appropriately rescaling it, one can verify that the quantity
\begin{equation}
    P_\textbf{a}(\psi)\equiv\frac{1}{d}\absval{\trace{D_\textbf{a}\psi}}^2\,,\quad\quad D_\textbf{a}\in \mathcal{W}(d)
\end{equation}
forms a probability vector. Hence, we can make use of the probability vector's $\alpha$-Rényi entropy to define a \textit{stabilizer entropy} of order $\alpha$~\cite{Leone_2022,wang_stabilizer_2023} according to:
\begin{equation}
    M_\alpha(\psi)\equiv\frac{1}{1-\alpha}\log{\sum_\textbf{a}P_\textbf{a}(\psi)^\alpha}-\log{d}\;,
\end{equation}
where $\alpha\ge 2$ and the $-\log d$ offset serves to make the quantity lower-bounded by zero.  (Let us remark again that the measures as defined here work only for pure states, even if it is possible to extend them to mixed states~\cite{Leone_2022}.) 

The stabilizer entropies have the following properties which establish them as good measures from a resource-theory perspective~\cite{wang_stabilizer_2023, QRT}:
\begin{itemize}
    \item[(i)] faithfulness: \begin{equation}
        M_\alpha(\psi)=0\quad \text{iff}\quad \ket{\psi}\,\, \text{is a stabilizer state};
    \end{equation}
    \item[(ii)] invariance under Clifford operations:  \begin{equation}
        \forall C\in\mathcal{C}(d)\quad\quad M_\alpha(C^\dagger\psi C)=M_\alpha(\psi);
    \end{equation}
    \item[(iii)] additivity under tensor product:
    \begin{equation}
        M_\alpha(\psi\otimes\phi)=M_\alpha(\psi)+M_\alpha(\phi);
    \end{equation}
\end{itemize}

The main focus of this paper is the characterization of the states that reach the maximum possible value of stabilizer entropy of order $\alpha\ge 2$. In particular, we prove that a quantum state can attain an upper bound on the stabilizer entropy given by
\begin{equation}\label{upb}
        M_\alpha(\psi)\leq\frac{1}{1-\alpha}\log{\frac{1+(d-1)(d+1)^{1-\alpha}}{d}}\;.
    \end{equation}
\emph{if and only if} it belongs to a WH-SIC\@.
    
\section{Symmetric Informationally Complete Measurements and t-designs} \label{3}
The most general form of measurement in quantum mechanics is represented by a positive operator valued measure (POVM)~\cite{peres_quantum_2002,Nielsen_Chuang_2010}. A POVM consists in a set of positive semi-definite operators $\{E_i\}_{i=1}^n$ such that $\sum_i E_i = \mathbb{I}$. Such operators capture the statistics of a measurement by assigning to each outcome $i$ an effect $E_i$, whose probability is given by the Born rule $p(i)=\trace{E_i\rho}$, where $\rho$ is a pure or mixed quantum-state assignment. A POVM is said to be informationally complete if its outcome probabilities $p(i)$ uniquely determine the original quantum state $\rho$. It is clear that to be informationally complete a POVM must have at least $d^2$ elements when the Hilbert space $\mathcal{H}_d$ is of dimension $d$.  In this way, its elements will span the $d^2$-dimensional space of linear operators on $\mathcal{H}_d$.

A \emph{symmetric informationally complete} set of quantum states~\cite{Renes_2004} (or, SIC hereafter) is a set of rank-1 projection operators $\{\Pi_i=\ketbra{\psi_i}\}_{i=1}^{d^2}$ such that
\begin{equation}\label{sic1}
    \trace{\Pi_i\Pi_j}=\absval{\braket{\psi_i|\psi_j}}^2=\frac{d\delta_{ij}+1}{d+1}.
\end{equation}
A consequence of this definition is that if such a set exists, the $\Pi_i$ in it must be linear independent.  Thus when rescaled to $E_i = \frac{1}{d}\Pi_i$ one obtains an informationally complete POVM.

In Ref.\ \cite{appleby_symmetric_2010} (co-authored by one of us), it was shown that the SICs can be characterized as sets of $d^2$ normalized positive semi-definite operators $\mathcal{A}=\{A_i\}$, i.e., with
\begin{equation}
\trace{A_i^2}=1\;,   
\end{equation}
that achieve a lower bound on a class of  ``orthogonality'' measures with respect to the Hilbert-Schmidt inner product. We must point out a major proviso with that result however: There was a typo in its start that propagated through the rest of the paper causing the numerical value of the bound to be incorrect. Taking that into account, the $\alpha$-orthogonality of a set for \emph{any real number} $\alpha\ge1$ is (now correctly) defined by
\begin{equation}
K_\alpha[\mathcal{A}]= \sum_{i \neq j=1}^{d^2}
\bigl(\trace{A_i A_j}\bigr)^{2\alpha}\;. \label{eq:KDef}
\end{equation}
Clearly this quantity would equal zero if the $A_i$ could be orthogonal to each other. However, with the constraint of positive semi-definiteness taken into account, a (corrected) nontrivial lower bound arises as
\begin{equation}
K_\alpha[\mathcal{A}] \ge \frac{d^2(d-1)}{(d+1)^{2\alpha-1}}\;. \label{eq:KBound}
\end{equation}
Equality is achieved if and only if the $A_i$ satisfy the conditions for a SIC\@. Notable in the derivation of Ref.\ \cite{appleby_symmetric_2010} is that the $A_i$ were not assumed to be rank-1 operators at the outset---``rank-1'ness'' had to be achieved.  In the next section, we will revisit the (corrected) derivation of Ref.\ \cite{appleby_symmetric_2010} for the special case that the $A_i$ are rank-1.

Let us remark that the quantity $K_\alpha[\mathcal{A}]$ looks superficially similar to the notion of a $t$-th order frame potential from design theory~\cite{Renes_2004}:
\begin{equation}\label{framepot}
F_t[\mathcal{V}]\equiv\sum_{j,k=1}^m\absval{\braket{\phi_j|\phi_k}}^{2t}\;,
\end{equation}
where $\mathcal{V}$ denotes sets of pure states $\mathcal{V}=\{|\phi_1\rangle, \ldots, |\phi_m\rangle\}$ with a floating cardinality $m$, and $t\ge 1$ is specifically an integer. So $m$ is not necessarily $m=d^2$, and the $t$ are integers in contrast to the general real-valued $\alpha$. This is because in design theory the concern is to find a finite set of pure states $\mathcal{V}$ such that
\begin{align}
\frac{1}{m}\sum_{i=1}^m  |\phi_i\rangle\langle \phi_i|^{\otimes t} = \int |\psi\rangle\langle\psi|^{\otimes t} d\psi\;.
\end{align}
Such a set of states can be found alternately through a $\mathcal{V}$ which achieves the global minimum of the function \cref{framepot}. When $t>2$, one can show that an $m$ allowing for a global minimum is strictly greater than $d^2$.  Note, however, that such is \emph{not} the optimization problem we are concerned with in this paper.  

Next, we say that a SIC is covariant with respect to a group $G$ if there exists at least one normalized fiducial vector $\ket{\phi}\in\mathcal{H}_d$ such that
\begin{equation}\label{Hermann}
    \absval{\bra{\phi}U_g\ket{\phi}}^2=\frac{1}{d+1},\quad\quad\forall\, U_g\ne \mathbb{I} \in G.
\end{equation}
In other words a SIC is covariant if there is a  group orbit $\mathcal{G}_\phi\equiv\{U_g\ket{\phi}\,\,|\,\,U_g\in G\}$ that satisfies~\cref{Hermann}. 

All known SICs are covariant under the single-qudit WH group, with the single exception of the Hoggar SIC in $d=8$ which covariant under the three-qubit WH group $\mathcal{W}(2)^{\otimes3}$. The question of whether there are SICs covariant under still different groups (or not covariant with respect to any group at all) is still open \cite{Zhu_2010, Samuel_2024}. 

As one might expect, an important feature of the WH-SICs is that Clifford unitaries send their fiducials into other fiducials. Indeed, if $\ket{\psi}$ is a fiducial vector for a WH-SIC then by definition
\begin{equation}\label{sic}
    \absval{\bra{\psi}D_\textbf{a}\ket{\psi}}^2=\begin{dcases}
        \quad1 \quad\quad& \mbox{if} \;\textbf{a}=0 \\ \frac{1}{d+1} \quad\quad&\,\text{otherwise.}
    \end{dcases}
\end{equation}
Let $U\in\mathcal{C}(d)$. Since the Clifford group is the normalizer of the Weyl-Heisenberg group, the action $U^\dagger D_\textbf{a}U$ will give another displacement operator up to a phase. Hence, for all elements $U\in\mathcal{C}(d)$, if $\ket{\psi}$ satisfies the SIC condition in \cref{sic}, then  $U\ket{\psi}$ will also. 

\section{Characterization of the upper bound on the stabilizer entropies} \label{4}

In this section we will finally prove our main theorem:
\begin{theorem}\label{thm2}
A state $\ket{\psi}$ will achieve our upper bound on the stabilizer entropy of any order $\alpha\ge 2$ 
    \begin{equation}
        M_\alpha(\psi)\leq\frac{1}{1-\alpha}\log{\frac{1+(d-1)(d+1)^{1-\alpha}}{d}}.
    \end{equation}
if and only if it belongs to a Weyl-Heisenberg covariant SIC\@.
\end{theorem}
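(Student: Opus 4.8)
The plan is to recast the stabilizer-entropy bound as an instance of the $\alpha$-orthogonality bound of \cref{eq:KBound}, applied to the orbit of a single pure state under the (phase-quotiented) Weyl-Heisenberg group. For a fixed $\ket{\psi}$ with $\psi=\ketbra{\psi}$, consider the $d^2$ operators $A_\textbf{a}\equiv D_\textbf{a}\,\psi\,D_\textbf{a}^\dagger$, one for each $\textbf{a}\in\mathbb{Z}_d\times\mathbb{Z}_d$. Each is a rank-1 projector, so $\trace{A_\textbf{a}^2}=1$, and the set $\mathcal{A}=\{A_\textbf{a}\}$ is admissible for \cref{eq:KBound}. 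First I would compute $\trace{A_\textbf{a}A_\textbf{b}}=\absval{\bra{\psi}D_{\textbf{b}-\textbf{a}}\ket{\psi}}^2$ using \cref{HEalgebra,dagger} (the phases drop out under the modulus), so by translation invariance $K_\alpha[\mathcal{A}]=d^2\sum_{\textbf{c}\ne 0}\absval{\bra{\psi}D_\textbf{c}\ket{\psi}}^{4\alpha} = d^2\sum_{\textbf{c}\ne 0}\bigl(d\,P_\textbf{c}(\psi)\bigr)^{2\alpha}$. Separating the $\textbf{c}=0$ term in $M_\alpha$ — note $P_0(\psi)=1/d$ always, since $\trace{\psi}=1$ — gives $\sum_\textbf{a}P_\textbf{a}(\psi)^\alpha = d^{-\alpha} + d^{-2\alpha-1}\sum_{\textbf{c}\ne 0}(d\,P_\textbf{c})^{2\alpha} = d^{-\alpha} + d^{-2\alpha-3}K_\alpha[\mathcal{A}]$, so maximizing $M_\alpha$ is exactly minimizing $K_\alpha[\mathcal{A}]$.

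Next I would substitute the lower bound $K_\alpha[\mathcal{A}]\ge d^2(d-1)/(d+1)^{2\alpha-1}$ into this identity and simplify, checking that the resulting lower bound on $\sum_\textbf{a}P_\textbf{a}^\alpha$ is $\bigl(1+(d-1)(d+1)^{1-\alpha}\bigr)/d^{\,\alpha+1}$, which upon feeding through $M_\alpha(\psi)=\tfrac{1}{1-\alpha}\log\sum_\textbf{a}P_\textbf{a}^\alpha-\log d$ reproduces \cref{upb} (remembering that $1/(1-\alpha)<0$ flips min into max). This establishes the bound for every $\alpha\ge 2$ — indeed for every real $\alpha\ge 1$. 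For the equality case, the characterization in \cref{eq:KBound} says $K_\alpha[\mathcal{A}]$ attains its minimum if and only if $\mathcal{A}$ satisfies the SIC conditions, i.e. $\absval{\trace{A_\textbf{a}A_\textbf{b}}}=\absval{\braket{\psi_\textbf{a}|\psi_\textbf{b}}}^2=1/(d+1)$ for $\textbf{a}\ne\textbf{b}$ where $\ket{\psi_\textbf{a}}=D_\textbf{a}\ket{\psi}$; by the translation computation above this is precisely $\absval{\bra{\psi}D_\textbf{c}\ket{\psi}}^2=1/(d+1)$ for all $\textbf{c}\ne 0$, which is the WH-covariance condition \cref{sic}. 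Hence $\ket{\psi}$ saturates the bound iff it is a fiducial for a WH-SIC, and conversely every element of a WH-SIC is such a fiducial (by Clifford covariance / by relabeling the orbit), giving the "only if" and "if" directions together.

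There is one genuine subtlety I would need to handle carefully rather than wave through: the orbit $\{D_\textbf{a}\ket{\psi}\}$ has $d^2$ labels, but if $\ket{\psi}$ has a nontrivial stabilizer the distinct projectors $A_\textbf{a}$ could be fewer than $d^2$, so $\mathcal{A}$ would be a multiset and the application of \cref{eq:KBound} (stated for sets of $d^2$ operators) needs justification — but this is not an obstruction, since the bound's derivation in Ref.\ \cite{appleby_symmetric_2010} only uses $\sum_\textbf{a}A_\textbf{a}\propto\mathbb{I}$ (true here: $\sum_\textbf{a}D_\textbf{a}\psi D_\textbf{a}^\dagger = \mathbb{I}$, a standard WH identity) together with $\trace{A_\textbf{a}^2}=1$, and in any case a repeated projector forces some off-diagonal overlap to equal $1$, which strictly violates the equality condition; so at the maximum the orbit is automatically a genuine SIC with $d^2$ distinct states. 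The main obstacle, then, is not conceptual but bookkeeping: carrying the powers of $d$ and the $(d+1)^{1-\alpha}$ versus $(d+1)^{2\alpha-1}$ exponents through the change of variables $(d\,P_\textbf{c})^{2\alpha}$ without error, and making sure the identity $P_0=1/d$ is invoked at the right place so that the off-diagonal sum is exactly what the $K_\alpha$ bound controls. I would also want to remark explicitly that, because \cref{eq:KBound} holds for all real $\alpha\ge 1$, the stabilizer-entropy bound and its SIC-saturation extend below $\alpha=2$ as well, with $\alpha\ge 2$ retained in the statement only because that is the regime where $M_\alpha$ is known to be a bona fide magic monotone.
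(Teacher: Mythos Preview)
Your strategy is exactly the paper's: express the stabilizer entropy of $\psi$ via the $K$-functional of the WH orbit $\{D_\textbf{a}\psi D_\textbf{a}^\dagger\}$, then invoke the SIC characterization of the lower bound on $K$. The paper packages the two steps as separate lemmas (your translation-invariance computation is its \cref{lemm}, and the bound \cref{eq:KBound} is re-derived as its \cref{lemm2}), but the content is identical.

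However, the bookkeeping you flagged as the main obstacle does in fact slip. Your displayed identity $\sum_\textbf{a}P_\textbf{a}(\psi)^\alpha = d^{-\alpha} + d^{-2\alpha-1}\sum_{\textbf{c}\ne 0}(d\,P_\textbf{c})^{2\alpha}$ is false: the left side involves $P_\textbf{c}^\alpha=d^{-\alpha}(d\,P_\textbf{c})^\alpha$, not $(d\,P_\textbf{c})^{2\alpha}$. The correct relation is
\[
\sum_\textbf{a}P_\textbf{a}(\psi)^\alpha \;=\; d^{-\alpha}\Bigl(1+d^{-2}K_{\alpha/2}[\mathcal{A}]\Bigr),
\]
so it is $K_{\alpha/2}$, not $K_\alpha$, that controls $M_\alpha$ --- exactly as the paper writes in the final proof, $M_\alpha(\phi)=\tfrac{1}{1-\alpha}\log\bigl(d^{-3}(K_{\alpha/2}[\mathcal{D}_\phi]+d^2)\bigr)$. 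Substituting $K_{\alpha/2}\ge d^2(d-1)/(d+1)^{\alpha-1}$ then yields \cref{upb} cleanly. This correction also kills your remark that the argument extends to all real $\alpha\ge 1$: since \cref{eq:KBound} requires its exponent to be at least $1$, applying it at $\alpha/2$ forces $\alpha\ge 2$. So the restriction in the theorem is not merely the monotone regime of $M_\alpha$ --- it is exactly where this proof technique is available.
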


We will approach the proof of this theorem by first demonstrating a couple of useful lemmas. The first establishes a lower bound on the $\alpha$-orthogonality $K_\alpha[\mathcal{V}]$, which we prove can be achieved if and only if $\mathcal{V}$ is a SIC-POVM. The second relates an $\alpha$-stabilizer entropy to $K_\alpha[\mathcal{V}]$ in the case of sets $\mathcal{V}$ that are orbits of the Weyl-Heisenberg group.

\begin{lemma}\label{lemm2}
Consider a set of normalized vectors $V\equiv\{\ket{\phi_i}\}\subset\mathcal{H}_d$ with cardinality $\absval{\mathcal{V}}=d^2$. Then,
\begin{equation}\label{FunkaBunka}
    K_\alpha[\mathcal{V}] \ge \frac{d^2(d-1)}{(d+1)^{2\alpha-1}}\;,
\end{equation}
with equality if and only if $\mathcal{V}$ is a SIC\@.
\end{lemma}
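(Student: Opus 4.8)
The plan is to reduce the statement to the already-corrected bound of Ref.~\cite{appleby_symmetric_2010}, namely \cref{eq:KBound}, by observing that for rank-1 normalized vectors the two formulations coincide. Concretely, set $A_i = \ketbra{\phi_i}$ for each $\ket{\phi_i}\in\mathcal{V}$. Then $A_i$ is positive semi-definite with $\trace{A_i^2} = \absval{\braket{\phi_i|\phi_i}}^2 = 1$, so $\mathcal{A}=\{A_i\}$ is an admissible set of $d^2$ normalized positive semi-definite operators for the result quoted around \cref{eq:KDef}. Moreover $\trace{A_iA_j} = \absval{\braket{\phi_i|\phi_j}}^2$, so $K_\alpha[\mathcal{A}] = \sum_{i\neq j}\bigl(\absval{\braket{\phi_i|\phi_j}}^2\bigr)^{2\alpha}$, which is exactly $K_\alpha[\mathcal{V}]$ under the natural reading of the notation. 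The bound \cref{FunkaBunka} is then immediate from \cref{eq:KBound}, and the equality condition transports across as well: equality forces the $A_i$ to meet the SIC conditions, and since they are rank-1 by construction this is precisely the statement that $\mathcal{V}$ is a SIC.

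Because the paper announces it will ``revisit the (corrected) derivation of Ref.~\cite{appleby_symmetric_2010} for the special case that the $A_i$ are rank-1,'' I would actually present the self-contained argument rather than merely cite it. The steps are as follows. First, expand each $A_i$ in the displacement-operator basis and use $\sum_{\textbf{a}}\absval{\chi_{\textbf{a}}(A_i)}^2$-type identities, or more directly work with the Gram-type quantity $G = \sum_i A_i$. Compute $\trace{G} = d^2$ (trace of each $A_i$ is $1$, though note $\trace{A_i}=1$ follows only for rank-1; for the general argument one tracks $\trace{A_i}$ separately) and $\trace{G^2} = \sum_{i}\trace{A_i^2} + \sum_{i\neq j}\trace{A_iA_j} = d^2 + S_1$, where $S_1 = \sum_{i\neq j}\trace{A_iA_j}$. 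Since $G$ is positive semi-definite acting on a $d$-dimensional space, Cauchy--Schwarz in the form $\trace{G}^2 \le d\,\trace{G^2}$ gives $d^4 \le d(d^2 + S_1)$, hence $S_1 \ge d^2(d-1)$. Second, apply the power-mean / Jensen inequality: there are $d^2(d^2-1)$ ordered pairs $(i,j)$ with $i\neq j$, and since $t\mapsto t^{2\alpha}$ is convex for $2\alpha\ge 1$,
\begin{equation}
K_\alpha[\mathcal{V}] = \sum_{i\neq j}\bigl(\trace{A_iA_j}\bigr)^{2\alpha} \ge d^2(d^2-1)\left(\frac{S_1}{d^2(d^2-1)}\right)^{2\alpha} \ge d^2(d^2-1)\left(\frac{d^2(d-1)}{d^2(d^2-1)}\right)^{2\alpha}.
\end{equation}
Simplifying $\frac{d^2(d-1)}{d^2(d^2-1)} = \frac{1}{d+1}$ yields $K_\alpha[\mathcal{V}] \ge d^2(d^2-1)(d+1)^{-2\alpha} = d^2(d-1)(d+1)^{1-2\alpha}$, which is \cref{FunkaBunka}.

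For the equality analysis, both inequalities must be saturated. Saturation of Cauchy--Schwarz $\trace{G}^2 = d\,\trace{G^2}$ forces $G \propto \mathbb{I}$, i.e.\ $\sum_i A_i = d\,\mathbb{I}$ (the resolution of identity up to the POVM rescaling). Saturation of Jensen forces all the inner products $\trace{A_iA_j}$ for $i\neq j$ to be equal, say to a common value $c$; combined with $S_1 = d^2(d-1)$ and the pair count this gives $c = \frac{1}{d+1}$. One then needs the standard argument that these two conditions, together with $\trace{A_i^2}=1$, imply each $A_i$ is a rank-1 projector: the simplest route is to note $\trace{A_iG} = \trace{A_i} \cdot$ (from $G=d\mathbb{I}$) $= d\,\trace{A_i}$ on one hand, and $= \trace{A_i^2} + \sum_{j\neq i}\trace{A_iA_j} = 1 + (d^2-1)\frac{1}{d+1} = 1 + (d-1)(d+1) \cdot \frac{1}{d+1}\cdot\frac{d^2-1}{d-1}$—I will do this bookkeeping carefully—to pin down $\trace{A_i}=1$, and then $\trace{A_i}=\trace{A_i^2}=1$ with $A_i\ge 0$ forces rank-1. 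Hence $\mathcal{V}$ satisfies \cref{sic1} exactly, i.e.\ it is a SIC; conversely a SIC saturates both steps by direct substitution. The main obstacle is purely the equality-condition bookkeeping—the chain of inequalities is routine, but extracting ``rank-1'' rather than assuming it (the feature the text emphasizes was genuinely earned in Ref.~\cite{appleby_symmetric_2010}) requires the careful trace identity above; for our rank-1 restated version it is comparatively painless since normalization already gives $\trace{A_i^2}=1$.
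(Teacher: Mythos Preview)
Your plan is correct and matches the paper's proof: Cauchy--Schwarz on the eigenvalues of $G=\sum_i\phi_i$ to bound $\sum_{i\ne j}\trace{\phi_i\phi_j}$, then Jensen/convexity to pass to the $2\alpha$-th powers, with equality forcing $G\propto\mathbb{I}$ and all pairwise overlaps equal (hence a SIC). The only cosmetic difference is that the paper first isolates the case $\alpha=1$ (two Cauchy--Schwarz steps) and then applies Jensen with exponent $\alpha$ to the already-squared overlaps, whereas you apply Jensen once with exponent $2\alpha$ directly; your aside about recovering rank-$1$ in the general positive-semidefinite setting is unnecessary here since the lemma already hands you unit vectors.
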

\begin{proof}
    Our proof follows closely to the argument of \cite{appleby_symmetric_2010}, while as pointed out earlier, correcting the mistakes in the original presentation. First, let us define the quantity
    \begin{equation}
        K_\alpha[\mathcal{V}]=\sum_{i\neq j}\big(\trace{\phi_i\phi_j}\!\big)^{2\alpha}\;,
    \end{equation}
where $\phi_i\equiv\ketbra{\phi_i}$ and $\ket{\phi_i}\in \mathcal{V}$.  Note that as opposed to the frame potential in \cref{framepot}, this sum runs only over the
cases where $i\ne j$. The reason for this exclusion will become apparent shortly as it will aid in deriving a lower bound via standard techniques based on the Cauchy-Schwarz inequality and convexity.

Particularly, note the following simple consequence of the Cauchy-Schwarz inequality.  Let $a_i$, $i=1,\ldots,n$, be any real numbers.  Then, 
\begin{equation}\label{34000ft}
\sum_{i=1}^n a_i^2\ge\frac{1}{n}\!\left(\sum_{i=1}^n a_i\right)^{\!2}\;,    
\end{equation}
where equality is achieved if and only if the $a_i=const\quad\forall i$.

We will first consider the case $\alpha=1$.  Let $G\equiv\sum_i\phi_i$. Then we note that 
    \begin{equation}\label{fragglerock}
        \trace{G^2}=\sum_{i,j}\trace{\phi_i\phi_j},
    \end{equation}
and using \cref{34000ft}, we have
\begin{eqnarray}
K_1[\mathcal{V}] &\ge& \frac{1}{d^4-d^2}\!\left(\sum_{i\ne j} \trace{\phi_i \phi_j}\right)^{\! 2}
\\
&=& \frac{1}{d^4-d^2}\Big(\trace{G^2}-d^2\Big)^2\;. 
\end{eqnarray}
On the other hand, if we let the eigenvalues of $G$ be denoted by $\lambda_a$, we can use \cref{34000ft} again to find
    \begin{equation}
        \trace{G^2}=\sum_{a=1}^d\lambda_a^2\geq\frac{1}{d}\!\left(\sum_{a=1}^d\lambda_a\right)^{\!2}=\frac{1}{d}\big(\trace{G}\!\big)^{2},
    \end{equation}
But
    \begin{equation}
        \trace{G}=\sum_i\trace{\phi_i}=d^2.
    \end{equation}
Putting these steps together we find that
    \begin{equation}\label{multifactoid}
        K_1[\mathcal{V}] \ge \frac{d^2(d-1)}{d+1}\;,
    \end{equation}
Equality is achieved only when both instances of the Cauchy-Schwarz achieve equality, namely,
\begin{enumerate}
\item
$\trace{\phi_i \phi_j}={\rm constant}$ for all $i\ne j$\;,
\item
$\lambda_a={\rm constant}$ for all $a$\;.
\end{enumerate}
The trace conditions taken together imply that $G$ must be proportional to the identity operator, in fact $G=d\,\mathbb{I}$. Consequently, with the aid of \cref{fragglerock},
\begin{equation}
\trace{\phi_i \phi_j}=\frac{1}{d+1} \quad \forall\, i\ne j\;.
\end{equation}
In other words, \cref{multifactoid} achieves equality if and only if $\mathcal{V}$ is a SIC.
    
Next, let us consider the cases where $\alpha>1$. Notice that the function $f(x)=x^\alpha$ is a strictly convex function. This suggests that Jensen's inequality is the next relevant thing to consider.  Namely, for any strictly convex function,
    \begin{equation}
        \sum_{k}^n f\!\left(x_k\right)\geq n f\!\left(\frac{1}{n}\sum_{i}^nx_k\right),
    \end{equation}
with equality if and only if all the $x_k$ are identical.
In the case of $K_\alpha[\mathcal{V}]$, where there are $d^4-d^2$ terms in the sum, we have 
    \begin{equation}\label{eq}
        K_\alpha[\mathcal{V}]\geq\big(d^4-d^2\big)\!\!\left(\frac{1}{d^4-d^2}K_1[\mathcal{V}]\right)^{\!\alpha}\geq \frac{d^2(d-1)}{(d+1)^{2\alpha-1}}\;.
    \end{equation}
Again equality is achieved if and only if $\mathcal{V}$ is a SIC.  This completes the proof of our lemma.
\end{proof}

    \begin{lemma}\label{lemm}
        Define the orbit of the Weyl-Heisenberg group passing through a state $\ket{\phi}$ as the set $\mathcal{D}_\phi\equiv\{D_\textbf{a}\ket{\phi}\,|\,D_\textbf{a}\in \mathcal{W}(d)\}$. Then, 
        \begin{equation}
            K_\alpha[\mathcal{D}_\phi]=d^3\e{(1-2\alpha)M_{2\alpha}(\phi)}-d^2.
        \end{equation}
    \end{lemma}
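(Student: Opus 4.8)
The plan is to compute $K_\alpha[\mathcal{D}_\phi]$ directly from its definition and recognize the resulting sum as (a rescaling of) $\sum_{\mathbf{a}} P_{\mathbf{a}}(\phi)^{2\alpha}$, which by the definition of the stabilizer entropy $M_{2\alpha}$ is exactly what appears in the claimed identity. First I would write out the orbit explicitly: $\mathcal{D}_\phi=\{D_{\mathbf{a}}\ket{\phi}\}$, which has $d^2$ elements (since we work in the quotient group $\mathcal{W}(d)$ with trivial phases). Then for two orbit elements labeled by $\mathbf{a},\mathbf{b}$, I would compute the Hilbert–Schmidt overlap
\begin{equation}
\trace{D_{\mathbf{a}}\phi D_{\mathbf{a}}^\dagger \, D_{\mathbf{b}}\phi D_{\mathbf{b}}^\dagger}
= \absval{\bra{\phi}D_{\mathbf{a}}^\dagger D_{\mathbf{b}}\ket{\phi}}^2\;.
\end{equation}
Using the Weyl-Heisenberg algebra \cref{HEalgebra} together with \cref{dagger}, one has $D_{\mathbf{a}}^\dagger D_{\mathbf{b}} = \omega^{\,c}\, D_{\mathbf{b}-\mathbf{a}}$ for some phase $c$ depending on $\mathbf{a},\mathbf{b}$, so the modulus squared kills the phase and the overlap depends only on the difference $\mathbf{c}=\mathbf{b}-\mathbf{a}$:
\begin{equation}
\absval{\bra{\phi}D_{\mathbf{a}}^\dagger D_{\mathbf{b}}\ket{\phi}}^2 = \absval{\bra{\phi}D_{\mathbf{b}-\mathbf{a}}\ket{\phi}}^2 = \big(d\,P_{\mathbf{b}-\mathbf{a}}(\phi)\big)^2\;,
\end{equation}
recalling $P_{\mathbf{c}}(\phi)=\frac{1}{d}\absval{\trace{D_{\mathbf{c}}\phi}}^2$.

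Next I would substitute this into the definition \cref{eq:KDef} of $K_\alpha$, being careful about the $i\ne j$ exclusion. The double sum over ordered pairs $(\mathbf{a},\mathbf{b})$ with $\mathbf{a}\ne\mathbf{b}$ becomes, after the change of variables $\mathbf{c}=\mathbf{b}-\mathbf{a}$ (for each fixed nonzero $\mathbf{c}$ there are exactly $d^2$ choices of $\mathbf{a}$):
\begin{equation}
K_\alpha[\mathcal{D}_\phi] = \sum_{\mathbf{a}\ne\mathbf{b}} \big(d^2 P_{\mathbf{b}-\mathbf{a}}(\phi)^2\big)^{\alpha}
= d^2 \sum_{\mathbf{c}\ne 0} d^{2\alpha} P_{\mathbf{c}}(\phi)^{2\alpha}
= d^{2+2\alpha}\!\left(\sum_{\mathbf{c}} P_{\mathbf{c}}(\phi)^{2\alpha} - P_{0}(\phi)^{2\alpha}\right).
\end{equation}
Here I would separate out the $\mathbf{c}=0$ term, for which $P_0(\phi)=\frac{1}{d}\absval{\trace{\phi}}^2=\frac{1}{d}$ since $\phi$ is a normalized rank-1 projector, so $P_0(\phi)^{2\alpha}=d^{-2\alpha}$ and the subtracted term contributes $d^{2+2\alpha}\cdot d^{-2\alpha}=d^2$.

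Finally I would bring in the stabilizer entropy. By definition $M_{2\alpha}(\phi)=\frac{1}{1-2\alpha}\log\!\big(\sum_{\mathbf{c}}P_{\mathbf{c}}(\phi)^{2\alpha}\big)-\log d$, which rearranges to
\begin{equation}
\sum_{\mathbf{c}} P_{\mathbf{c}}(\phi)^{2\alpha} = d^{\,1-2\alpha}\,\e{(1-2\alpha)M_{2\alpha}(\phi)}\;.
\end{equation}
Substituting this into the expression above gives $K_\alpha[\mathcal{D}_\phi] = d^{2+2\alpha}\cdot d^{1-2\alpha}\,\e{(1-2\alpha)M_{2\alpha}(\phi)} - d^2 = d^3\,\e{(1-2\alpha)M_{2\alpha}(\phi)} - d^2$, as claimed. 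I expect the only genuinely delicate point to be the bookkeeping of the change of variables $\mathbf{c}=\mathbf{b}-\mathbf{a}$ over $\mathbb{Z}_d\times\mathbb{Z}_d$ — confirming that the map $(\mathbf{a},\mathbf{b})\mapsto\mathbf{c}$ is exactly $d^2$-to-one onto each nonzero $\mathbf{c}$, and that the exclusion $\mathbf{a}\ne\mathbf{b}$ corresponds precisely to $\mathbf{c}\ne 0$ — together with the verification that the orbit genuinely has $d^2$ distinct elements so that $K_\alpha$ is being applied to a set of the right cardinality; everything else is a direct application of \cref{HEalgebra} and the definitions.
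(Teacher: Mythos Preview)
Your argument is correct and follows essentially the same route as the paper's own proof: compute the pairwise overlaps on the WH orbit, use \cref{dagger,HEalgebra} to reduce $D_{\mathbf a}^\dagger D_{\mathbf b}$ to $D_{\mathbf b-\mathbf a}$ up to a phase, change variables $\mathbf c=\mathbf b-\mathbf a$ (which is $d^2$-to-one), and identify the resulting sum with the definition of $M_{2\alpha}$. One cosmetic slip to repair: since $P_{\mathbf c}(\phi)=\tfrac{1}{d}\absval{\bra{\phi}D_{\mathbf c}\ket{\phi}}^2$ already contains the square, one has $\absval{\bra{\phi}D_{\mathbf c}\ket{\phi}}^2=d\,P_{\mathbf c}(\phi)$ rather than $\big(d\,P_{\mathbf c}(\phi)\big)^2$; you compensate in the next display by raising to the power $\alpha$ instead of $2\alpha$, so the term $d^{2\alpha}P_{\mathbf c}(\phi)^{2\alpha}$ and everything downstream is unaffected.
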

    \begin{proof}
    Let us denote $\phi_\textbf{a}\equiv D_\textbf{a}\ketbra{\phi} D_\textbf{a}^\dagger\in\mathcal{D}_\phi$. We have
    \begin{align}
        K_\alpha[\mathcal{D}_\phi]&=\sum_{\textbf{ab},\,\textbf{a}\neq\textbf{b}}\trace{\phi_\textbf{a}\phi_\textbf{b}}^{2\alpha}\\
        &=\sum_{\textbf{a}\textbf{b},\,\textbf{a}\neq\textbf{b}}\trace{D_\textbf{a}\ketbra{\phi}D_\textbf{a}^\dagger D_\textbf{b}\ketbra{\phi}D_\textbf{b}^\dagger}^{2\alpha}\\
        &=\sum_{\textbf{a}\textbf{b},\,\textbf{a}\neq\textbf{b}}\left(\bra{\phi}D_\textbf{a}^\dagger D_\textbf{b}\ket\phi\,\bra{\phi}D_\textbf{b}^\dagger D_\textbf{a}\ket\phi\right)^{2\alpha}\\
        &=\sum_{\textbf{a}\textbf{b},\,\textbf{a}\neq\textbf{b}}\absval{\bra{\phi}D_{\textbf{a}}^\dagger D_\textbf{b}\ket\phi}^{4\alpha}.
    \end{align}
    Since we are only considering rank-1 projectors, we can easily reintegrate the diagonal terms into the sum, obtaining
    \begin{equation}
        K_\alpha[\mathcal{D}_\phi]=\sum_{\textbf{a}\textbf{b}}\absval{\bra{\phi}D_{\textbf{a}}^\dagger D_\textbf{b}\ket\phi}^{4\alpha}-d^2.
    \end{equation}
    Then, using the properties in \cref{HEalgebra,dagger}, we have that
    \begin{align}
        K_\alpha[\mathcal{D}_\phi]&=\sum_{\textbf{a}\textbf{b}}\absval{\bra{\phi}D_{-\textbf{a}}D_\textbf{b}\ket{\phi}}^{4\alpha}-d^2\\&=\sum_{\textbf{a}\textbf{b}}\absval{\bra{\phi}D_{\textbf{b}-\textbf{a}}\ket{\phi}}^{4\alpha}-d^2.
    \end{align}
    Now, relabeling the indices such that $\textbf{b}-\textbf{a}\equiv\textbf{b}'$, we have
    \begin{align}
        K_\alpha[\mathcal{D}_\phi]&=\sum_{\textbf{a}\textbf{b}'}\absval{\bra{\phi}D_{\textbf{b}'}\ket{\phi}}^{4\alpha}-d^2\\
        &=d^2\sum_{\textbf{b}'}\absval{\bra{\phi}D_{\textbf{b}'}\ket{\phi}}^{4\alpha}-d^2.
    \end{align}
    Notice that we can write the $2\alpha$-th stabilizer entropy as
    \begin{equation}
        M_{2\alpha}(\phi)=\frac{1}{1-2\alpha}\log{\frac{1}{d}\sum_\textbf{a}\absval{\bra{\phi}D_\textbf{a}\ket{\phi}}^{4\alpha}}\;.
    \end{equation}
    Therefore, we have the claim
    \begin{equation}
        K_\alpha[\mathcal{D}_\phi]=d^3\e{(1-2\alpha)M_{2\alpha}(\phi)}-d^2.
    \end{equation} \end{proof}
Now we can prove \cref{thm2}.  
\begin{proof}
    Notice that \cref{lemm2} works for any choice of SIC-POVM, group covariant or not. In particular, it will be true if we consider a WH-SIC generated as an orbit of a Weyl-Heisenberg group. Moreover, the quantity $K_\alpha[\mathcal{V}]$ is well defined for any real number $\alpha\geq1$. Hence, recalling that the stabilizer entropies are defined for $\alpha\geq2$, \cref{lemm2} and \cref{lemm} taken together imply the following inequality valid for any state $\ket\phi$:
    \begin{align}
        M_\alpha(\phi)&=\frac{1}{1-\alpha}\log{\frac{1}{d^3}\left(K_{\alpha/2}[\mathcal{D}_\phi]+d^2\right)}\\&\leq\frac{1}{1-\alpha}\log{\frac{1+(d-1)(d+1)^{1-\alpha}}{d}}
    \end{align}
    with equality if and only if the state $\ket{\phi}$ is a WH-SIC fiducial state.
\end{proof}

\section{Discussion} \label{5}
In this work we have fully characterized the states achieving the upper bound on the stabilizer entropies in \cref{upb}:  They must belong to a Weyl-Heisenberg covariant SIC\@. Our result establishes a deep connection between the SICs and the stabilizer states, putting them at two extremes of the spectrum of magic. 

This is interesting for a number of reasons. First, it establishes the WH-SIC states as the \emph{uniquely} most magical states in the Hilbert space, supposing they do indeed exist.  This confirms a suspicion that has been circulating in the community since at least 2014~\cite{andersson_states_2015}. Moreover, it relates the problem of finding the maximum values of magic to the historically hard problem of SIC existence. 

In this regard, there is quite a deep lesson here.  The problem of WH-SIC construction in any $d\ge4$ is known to be deeply number theoretical in its essence~\cite{Bengtsson2016,Bengtsson2020}. (See Refs.\ \cite{Appleby2021,Bengtsson2024,Appleby2025} for some of the latest developments.)  In fact it has been speculated~\cite{FuchsStaceyNSF} that a quantum computer may be needed for calculating SICs efficiently because of the difficulty of computing the unit groups of arbitrary algebraic number fields~\cite{Eisentraeger2014}.  Our result shows that magic, at least as quantified by the stabilizer entropies, will inherit exactly the same difficulties.

We may comment that this observation will also be true for another quantifier of magic~\cite{Feng_2025} that we only become aware of while writing this paper---this time a concave function, in contrast to the convex stabilizer entropies considered here.  That quantifier is based on the Zhu's concept of a ``1/2-design''~\cite{Zhu2022}, which requires separate techniques to elucidate.

On this note, it is insightful that while our upper bound in \cref{thm2} was established independently of the choice of the WH group---single qudit vs compound subsystems---this is clearly not true about the existence of potential SICs associated with them. In fact, SICs do not play so nicely with products of Weyl-Heisenberg groups. In \cite{Godsil_2009}, it was proven that the existence of SICs covariant under a tensor product of $n$ copies of the single-qubit Pauli group (that is $\mathcal{W}(2)$) is not allowed, unless $n=3$. Therefore, our result confirms that the upper bound of the stabilizer entropy is not achievable in the case of compound WH groups, except when $d=2^3$. 

Related to this, if one still considers different WH groups in Hilbert spaces of the same dimension, as e.g., $\mathcal{W}(4)$ vs.\ $\mathcal{W}(2)^{\otimes2}$, one sees that it is impossible to reach our bound when treated as a bi-partite structure, though it possible for the single global WH group~\cite{Renes_2004}. This example hints at the fact that local non-Clifford operations that act only on single subsystems are limited in the amount of magic they can inject into a system. It also makes explicit how strong the dependence of the Clifford structure is on the structure of the system's Hilbert space, i.e., whether treated as a tensor product or not. 

Finally, let us mention something of a conceptual puzzle brought up by our result.  SICs have been of interest to the quantum foundations program of QBism for a number of years because they help draw out the analogy between the quantum mechanical Born rule and the (classical) law of total probability~\cite{Fuchs2013,Appleby2017}.  In fact, with their aid, one can bring the Born rule into a form that is as close as possible to the classical law~\cite{DeBrota2020}.  So, it is a bit surprising that the key feature of SICs that made those results true is, in fact, the same feature that powered our result here that the SICs would be a maximally nonclassical resource.  The key feature is that the SICs play the role of quasi-orthonormal bases in the cone of positive semi-definite operators~\cite{appleby_symmetric_2010}. In that sense, SICs are as close to classical as can be.  This contrast is also manifested in  another result of one of us~\cite{Fuchs2004,Audenaert2004} that such states would be the most sensitive to eavesdropping in a measure and re-prepare strategy---again a maximally nonclassical resource.  We simply comment that this duality is interesting and perhaps meaningful.

\section{Acknowledgements}
GC thanks G. Esposito for advice and insightful discussions. This work was supported in part by National Science Foundation Grants 2210495 and 2328774, and in part by Grant 62424 from the John Templeton Foundation. The opinions expressed in this publication are those of the authors and do not necessarily reflect the views of the John Templeton Foundation.

\bibliographystyle{utphys3}
\bibliography{Bib}

\end{document}